\documentclass[prd,preprint, nofootinbib,amsmath,amssymb, aps, floatfix]{revtex4-1}
\pdfoutput=1
\usepackage{dcolumn}
\usepackage{bm}
\usepackage{amssymb,amsmath,graphicx}
\usepackage[linktoc=page]{hyperref}
\usepackage{tikz}
\usepackage{fancyhdr}
\usetikzlibrary{decorations.pathmorphing, patterns}
\usepackage{pgfplots}

\usepackage{amsthm}
\newtheorem{theorem}{Theorem}
\newtheorem{lemma}[theorem]{Lemma}

\theoremstyle{definition}

\newcommand{\h}{\mathcal{H}}
\newcommand{\A}{\mathcal{A}}
\newcommand{\M}{\mathcal{M}}
\newcommand{\N}{\tilde{N}}

\newcommand{\bra}[1]{\langle #1|}
\newcommand{\ket}[1]{|#1\rangle}
\newcommand{\braket}[2]{\langle #1|#2\rangle}

\begin{document}

\title{Type III von Neumann Algebras are Magical}
\author{Mudassir Moosa}%
\email{mudassir.moosa20@gmail.com}
\affiliation{Department of Physics and\\ Center for Theory of Quantum Matter\\
University of Colorado\\ Boulder, CO 80309}

\begin{abstract}
The number of non-Clifford gates needed to perform a task, or simply \textit{magic}, is a resource for fault-tolerant quantum computation. Von Neumann algebras provide a formal mathematical structure to describe infinite-dimensional quantum systems, such as those in quantum field theory or quantum statistical mechanics. A particularly important class of von Neumann algebras is called Type III algebras, for which the standard notions of finite-dimensional systems such as density matrices and traces break down. In this work, we argue that Type III von Neumann algebras fundamentally require an infinite amount of magic. Specifically, we consider the thermodynamic limit of finite-dimensional quantum systems, such as lattice systems, and show that if the states in the thermodynamic limit possess only a bounded amount of magic, the resulting local von Neumann algebra cannot be of Type III. Our result has direct implications for the quantum simulations of quantum field theories, for which the algebra of a local subregion is known to be of Type III.
\end{abstract}

\maketitle

\section{Introduction} \label{intro}

Quantum simulations of strongly coupled quantum field theories and quantum gravity models are expected to provide novel insights into the physics of black holes and the emergence of spacetime \cite{Preskill:2018fag, Bauer:2022hpo}. Understanding the resource requirements to perform these simulations is thus a fundamental problem. In fault-tolerant quantum computing, the number of non-Clifford gates needed to perform a task is usually a suitable metric for the required resources for two main reasons. The first is due to the Gottesman-Knill theorem, which states that a quantum circuit involving only Clifford gates can be simulated on a classical computer in polynomial time \cite{Gottesman:1998hu}. The second reason is that the fault-tolerant implementation of non-Clifford gates requires the preparation and injection of certain \textit{magic states} \cite{Bravyi:2004isx}. The distillation of these magic states is highly resource-intensive and typically dominates the overhead of the overall algorithm. Therefore, quantifying the number of non-Clifford gates, or simply \textit{magic}, required to simulate strongly coupled quantum systems is an important problem. 

This line of study was initiated in Ref.~\cite{White:2020zoz}, where it was shown that the magic in the ground states of the $\mathbb{Z}_3$ Potts model at its critical point scales extensively with the system size. The low-energy physics of this spin chain is known to be described by a conformal field theory \cite{Cardy:1986ie,vonGehlen:1986gk}. This led to the conclusion that the preparation of a ground state of a conformal field theory on a quantum computer, which is the first step in running a quantum simulation, requires a large number of non-Clifford gates. 

More recently, Refs.~\cite{Cao:2023mzo,backreaction-magical, Cao:2026uoq} have explored the role of magic in the AdS/CFT correspondence and holographic error-correcting codes \cite{maldacena,Pastawski:2015qua}. In particular, these works have shown that if the holographic code is an exact stabilizer code \cite{Gottesman:1997zz} or an exact subsystem erasure-correcting code \cite{Pastawski:2016qrs}, it cannot support non-trivial area operators \cite{Harlow:2016vwg}. Consequently, if the state of the boundary theory does not possess magic, the bulk matter fields cannot backreact on the gravitational spacetime, and hence, the bulk spacetime cannot be dynamic.

Our goal in this work is to extend these results using the formalism of von Neumann algebras, which play a crucial role in the study of infinite-dimensional quantum systems. To illustrate this, consider an infinite-dimensional system that arises as a thermodynamic limit of a finite quantum system. As we take the thermodynamic limit, the Hilbert space splits into disjoint superselection sectors corresponding to different macroscopic phases \cite{Haag:1996hvx}. The states within a specific sector have macroscopic properties (e.g., non-zero mean magnetization or charge) and exhibit large entanglement between the degrees of freedom inside a local region and those outside it. Due to this infinite entanglement, the notion of a local subsystem defined by the tensor factorization of the Hilbert space ceases to be meaningful. To resolve this, the theory of operator algebras provides a formal framework where a local subsystem corresponding to a subregion is defined by the von Neumann algebra of bounded operators localized to that spatial region.

Furthermore, the specific structure of the von Neumann algebra associated with a local subregion depends on the superselection sector in which it is represented. This is because the macroscopic properties of states in a given superselection sector, such as mean magnetization or temperature, dictate the convergence of sequences of local operators. A particularly important class of von Neumann algebras consists of those for which the standard notions of finite-dimensional systems, such as density matrices and traces, break down \cite{Sorce:2023fdx}. These are known as Type III von Neumann algebras (which we formally define in Sec.~\ref{sec-algebras}), and they are precisely the type of von Neumann algebras that are associated with local subregions in quantum field theories \cite{Haag:1996hvx}. 

\begin{figure}
    \centering
    \includegraphics[width=0.67\linewidth]{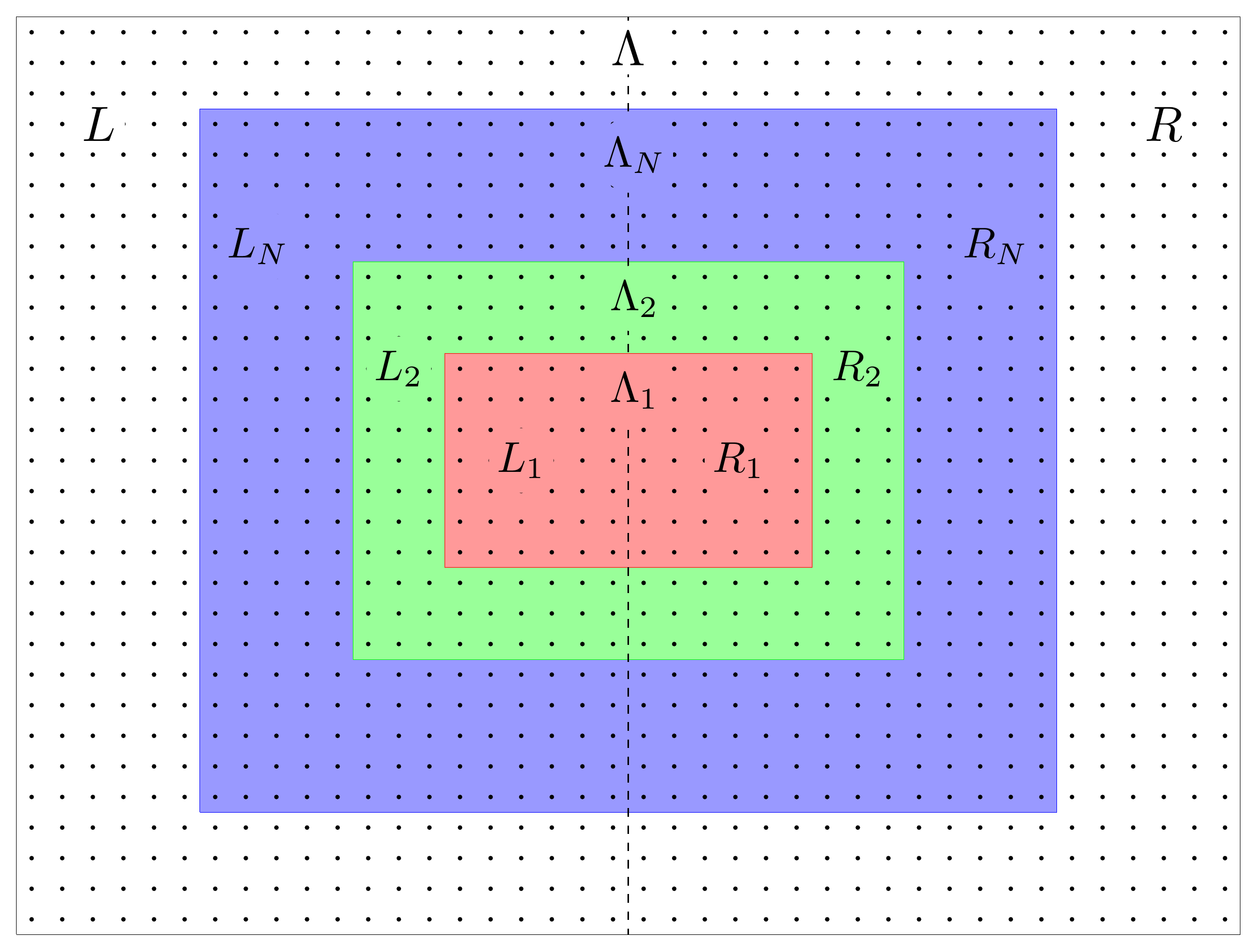}
    \caption{A schematic representation of a nested sequence of lattices $\Lambda_1 \subset \Lambda_2 \subset \cdots$ that limits to an infinitely large lattice $\Lambda$. Each site (denoted by a black dot) contains a single qubit. Thus, the sequence of lattices defines a nested sequence of Hilbert spaces $\h_1 \subset \h_2 \subset \cdots$. Furthermore, each finite lattice in the sequence is partitioned into a right and a left sublattice, i.e., $\Lambda_N = R_N \cup L_N$ for finite $N$, which define subalgebras $\A_N \subset B(\h_N)$ of operators acting non-trivially only on qubits in $R_N$. Note that the lattices are shown to be two-dimensional and symmetric only for the sake of simplicity, though our results are applicable to higher-dimensional and non-symmetric lattices as well.}
    \label{fig-lattice}
\end{figure}

In this work, our goal is to show that the von Neumann algebra can only be of Type III if the superselection sector contains states with infinite magic. To do this, we consider infinite-dimensional quantum systems that arise as the thermodynamic limit of lattice models. More precisely, we consider a nested sequence of finite lattices, $\Lambda_1 \subset \Lambda_2 \subset \cdots$, where the limiting lattice is infinitely large, as shown in Fig.~\ref{fig-lattice}. We assume that each site $x\in\Lambda_N$ contains a single qubit\footnote{This assumption is made purely for the sake of concreteness and our results are also valid if we consider a $d$-dimensional qudit at each site.}, and hence, we get a nested sequence of Hilbert spaces $\h_1 \subset \h_2 \subset \cdots$, where $\h_N = \otimes_{x\in\Lambda_N} \mathbb{C}^2_{x}$. Once we specify how smaller Hilbert spaces are embedded into larger Hilbert spaces in this sequence, we can take the thermodynamic limit to define an infinite-dimensional separable Hilbert space $\h$. Note that a different choice of embedding leads to a different Hilbert space with different macroscopic properties (i.e., a different superselection sector) in the thermodynamic limit. (More precisely, the sequence of Hilbert spaces along with the embedding maps defines an \textit{inductive system}, and $\h$ is the \textit{inductive limit} of this system; see Sec.~\ref{sec-inductive-limit} for a review.)

Moreover, we partition each lattice $\Lambda_N$ into a right sublattice (denoted by $R_N$) and a left sublattice (denoted by $L_N$), such that $R_N \subset R_M$ and $L_N \subset L_M$ for $M>N$, as shown in Fig.~\ref{fig-lattice}. If we denote the algebra of operators that act trivially on $L_N$ by $\A_N \subset B(\h_N)$, we obtain a sequence of nested algebras $\A_1 \subset\A_2 \subset \cdots$. Taking the thermodynamic limit (inductive limit) yields an infinite-dimensional von Neumann algebra $\A \subset B(\h)$ corresponding to the observables localized in the right region.

As noted earlier, the resulting von Neumann algebra $\A \subset B(\h)$ depends on the Hilbert space $\h$ (i.e., the superselection sector). In particular, we prove that if the states in $\h$ contain only a bounded amount of magic, then $\A$ cannot be of Type III. Therefore, our result establishes that Type III von Neumann algebras require infinite magic.

The rest of this paper is organized as follows. We present the necessary background in Sec.~\ref{background}. Specifically, we briefly review stabilizer states and discuss a few measures of magic in Sec.~\ref{stabilizer states}. In Sec.~\ref{sec-algebras}, we present a review of von Neumann algebras, focusing on the properties of projections and linear functionals on von Neumann algebras. Then, in Sec.~\ref{sec-inductive-limit}, we discuss the formalism of inductive systems and review the construction of the inductive limit Hilbert space and algebra. We prove our main result in Sec.~\ref{sec-main-thms}. Finally, we end with a discussion of the implications and extensions of our work in Sec.~\ref{sec-discussion}. 

{\bf Note: } While we were finalizing this manuscript, Ref.~\cite{Benedetti:2026mfy} appeared on arXiv, which independently demonstrated that the vacuum states of Lorentz-invariant QFTs require non-zero magic. Our work complements this result by establishing that Type III algebras fundamentally require unbounded magic. Furthermore, our approach proves this without relying on specific assumptions, such as Lorentz invariance or the Reeh-Schlieder theorem.

\section{Background and Preliminaries} \label{background}

\subsection{Stabilizer States and Magic} \label{stabilizer states}

Here, we review the theory of stabilizer states and magic for finite-dimensional Hilbert spaces. For simplicity, we only consider a system of $N$ qubits, though the discussion can be generalized to $N$ qudits as well.

Consider a Hilbert space of $N$ qubits. An operator acting on this space is called a Pauli string if it is written as the product of Pauli or identity operators, $\{X,Y,Z,I\}$, at each qubit. These $4^N$ Pauli strings, along with the phase factors $\{\pm1 , \pm i\}$, form a group called the Pauli group, which we denote by $\mathcal{P}_N$. 

A unitary operator is called a \textit{Clifford} if it maps an element of the Pauli group to another element of the Pauli group. For example, a unitary that can be written as a product of single-qubit Hadamard operators $H = (X+Z)/\sqrt{2}$, single-qubit $\pi/2$-rotation operators $R_z(\pi/2) = e^{-i\pi Z/4}$, and two-qubit entangling operators $e^{i\pi Z\otimes Z / 4}$ is a Clifford unitary. The Clifford unitaries form a group called the \textit{Clifford group}, which we denote by $\mathcal{C}_N$. In other words, the Clifford group $\mathcal{C}_N$ is the normalizer of the Pauli group in the group of all $N$-qubit unitary operators, that is, $\mathcal{C}_N^\dagger \mathcal{P}_N \mathcal{C}_N = \mathcal{P}_N \, $.

Let us now consider a quantum state of the form $\ket{\psi} \, = \, {C}\ket{0}^{\otimes N}$, where $C\in \mathcal{C}_N$. It is straightforward to show that this state is an eigenstate of $2^N$ distinct Pauli strings with a $+1$ eigenvalue. These $2^N$ Pauli strings are of the form $C (Z^{a_1}\cdots Z^{a_N}) C^\dagger$ for $a_{i} \in \{0, 1\}$, and hence, commute with each other. The state $\ket{\psi}$ is called a \textit{stabilizer state} and the $2^N$ Pauli strings discussed above are called the stabilizers of this state. 

Any abelian subgroup of the Pauli group $\mathcal{P}_N$ that does not contain the $-I$ operator is called a \textit{stabilizer group}. Therefore, the stabilizers of a stabilizer state $\ket{\psi}$ form a stabilizer group, which we denote by $\mathcal{S}_{\psi}$. Conversely, for any stabilizer group that contains $2^N$ elements, there is a unique pure state which is an  eigenstate with a $+1$ eigenvalue for every element of that group.

An interesting property of stabilizer states is that they only allow for a flat spectrum. More precisely, for a stabilizer state $\ket{\psi}_{AB}$ of a bipartite system, the reduced states $\rho_A$ and $\rho_B$ are proportional to projection operators. This property simplifies the calculation of the entanglement entropy in stabilizer states. In this work, this property will play a crucial role in the proofs of our main results in Sec.~\ref{sec-main-thms}, and hence, we present this as the following lemma.
\begin{lemma}[\cite{Fattal:2004frh}, Result $1$]  \label{lemma-stab-proj}
    Consider a pure stabilizer state $\ket{\psi}$ of $N$ qubits with a stabilizer group $\mathcal{S}_\psi$. Let us split the $N$ qubits into two disjoint sets $A$ and $B$ consisting of $N_A$ and $N_B$ qubits, respectively. The reduced state for the $A$ set of qubits is given by 
    \begin{align}
        \rho_A \, = \, \frac{1}{2^{N_A}} \, \sum_{\mathcal{P} \in \mathcal{S}_{\psi ; A}} \mathcal{P} \,  ,
    \end{align}
    where $\mathcal{S}_{\psi ; A} \subset \mathcal{S}_\psi$ is the set of all stabilizers of $\ket{\psi}$ which act as $I$ on the set $B$. Moreover, $\rho_A^2 = 2^{-N_A} \rho_A$, and hence, $\rho_A$ is proportional to a projection operator. 
\end{lemma}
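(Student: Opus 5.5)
The plan is to start from the Pauli expansion of the projector onto the stabilizer state and then take the partial trace over $B$. Since $\mathcal{S}_\psi$ is an abelian group of $2^N$ commuting Pauli strings with $\ket{\psi}$ as its unique common $+1$ eigenvector, the projector can be written as
\begin{align}
\ket{\psi}\bra{\psi} \, = \, \frac{1}{2^N} \sum_{\mathcal{P}\in\mathcal{S}_\psi} \mathcal{P} \, .
\end{align}
To justify this, note that $\frac{1}{|\mathcal{S}_\psi|}\sum_{\mathcal{P}\in\mathcal{S}_\psi}\mathcal{P}$ is the group-averaging projector onto the joint $+1$ eigenspace. That eigenspace is one-dimensional by the uniqueness statement reviewed above. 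Equivalently, one can write $\ket{\psi}\bra{\psi}=C\,(\ket{0}\bra{0})^{\otimes N}C^\dagger$, expand $\ket{0}\bra{0}=(I+Z)/2$ on each qubit, and conjugate by $C$.

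Next, trace out $B$ term by term. Each stabilizer is a Pauli string that factorizes as $\mathcal{P}=\pm\,\mathcal{P}_A\otimes\mathcal{P}_B$, up to a phase. A phase other than $\pm1$ cannot occur, because $\mathcal{P}$ is Hermitian: it has eigenvalue $+1$ on $\ket{\psi}$ and squares to $I$, since $-I\notin\mathcal{S}_\psi$. We then have $\mathrm{Tr}_B\,\mathcal{P}_B = 2^{N_B}$ if $\mathcal{P}_B=I$, and $0$ otherwise, because non-identity Pauli strings are traceless. Only the stabilizers in $\mathcal{S}_{\psi;A}$ survive, and the prefactor becomes $2^{N_B}/2^N=2^{-N_A}$, which is the stated formula.

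For the projection property, I will use that $\mathcal{S}_{\psi;A}$ is a subgroup of $\mathcal{S}_\psi$: it contains $I$, and the product of two strings that are the identity on $B$ is again the identity on $B$. Squaring the sum gives
\begin{align}
\Bigl(\sum_{\mathcal{P}\in\mathcal{S}_{\psi;A}}\mathcal{P}\Bigr)^2 \, = \, \sum_{\mathcal{P},\mathcal{Q}}\mathcal{P}\mathcal{Q} \, = \, |\mathcal{S}_{\psi;A}|\sum_{\mathcal{R}\in\mathcal{S}_{\psi;A}}\mathcal{R} \, ,
\end{align}
by reindexing $\mathcal{R}=\mathcal{P}\mathcal{Q}$. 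Hence $\rho_A^2 = 2^{-2N_A}\,|\mathcal{S}_{\psi;A}|\,\bigl(2^{N_A}\rho_A\bigr) = 2^{-N_A}|\mathcal{S}_{\psi;A}|\,\rho_A$.

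The main subtlety is the normalization. As literally stated, $\rho_A^2 = 2^{-N_A}\rho_A$ would force $|\mathcal{S}_{\psi;A}|=1$. The correct general statement is that $\rho_A$ is proportional to the projector $P_A=|\mathcal{S}_{\psi;A}|^{-1}\sum_{\mathcal{P}\in\mathcal{S}_{\psi;A}}\mathcal{P}$. Taking the trace gives $\rho_A = P_A/\mathrm{Tr}P_A$, with $\mathrm{Tr}\,P_A = 2^{N_A}/|\mathcal{S}_{\psi;A}|$. So I would state the conclusion as $\rho_A^2=(\mathrm{Tr}P_A)^{-1}\rho_A$, i.e. $\rho_A$ is proportional to a projection. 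This is the only property used later in the paper, and it agrees with the paper's form when the prefactor is read as $1/\mathrm{Tr}P_A$.
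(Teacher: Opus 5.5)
Your proof is correct and is the standard argument: the paper gives no proof of its own and only cites Fattal et al., and your route of expanding $\ket{\psi}\bra{\psi}=2^{-N}\sum_{\mathcal{P}\in\mathcal{S}_\psi}\mathcal{P}$, letting the partial trace kill every term outside $\mathcal{S}_{\psi;A}$, and squaring the subgroup sum is that standard stabilizer-formalism argument. Your normalization remark is also right: in general $\rho_A^2=2^{-N_A}|\mathcal{S}_{\psi;A}|\,\rho_A$, so the stated identity $\rho_A^2=2^{-N_A}\rho_A$ is a misprint that fails, e.g., for $\ket{0}_A\ket{0}_B$ with $N_A=1$ (where $\rho_A$ is pure), but only the proportionality to a projection is used in the proof of Lemma~\ref{main-lemma}, so nothing downstream is affected.
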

\noindent We refer the readers to Ref.~\cite{Fattal:2004frh} for a proof of this lemma. 

One implication of this lemma is that the Clifford group does not form a universal set of quantum gates since Clifford unitaries acting on $\ket{00\cdots 0}$ only prepare states with a flat spectrum. Therefore, universal quantum computation requires access to non-Clifford gates (like Toffoli or T gates). Moreover, in fault-tolerant protocols, non-Clifford gates are typically implemented through the injection of magic states which are resource-intensive to prepare \cite{Bravyi:2004isx}. Therefore, the amount of \textit{magic} or the non-stabilizerness of a quantum state, which roughly speaking measures the number of non-Clifford gates required to prepare the state, is a fundamental resource for universal fault-tolerant quantum computation \cite{Emerson:2013zse,Howard:2017maw}.

There are many measures to quantify the magic in a pure state, such as \textit{mana} \cite{Bravyi:2004isx}, \textit{stabilizer R\'enyi entropy} \cite{Leone:2021rzd}, or \textit{stabilizer fidelity} \cite{Bravyi:2018ugg,Liu:2020yso}. In particular, the stabilizer fidelity, $F_{\text{stab}}(\ket{\psi})$, is defined as \cite{Bravyi:2018ugg}
\begin{align}
    F_{\text{stab}}(\ket{\psi}) \, = \, \max_{\ket{\phi}\in \text{STAB}_N} \vert\braket{\psi}{\phi}\vert^{2}  \, ,
\end{align}
where $\text{STAB}_N$ is the set of all $N$-qubit stabilizer states. A closely related quantity is the \textit{min-relative entropy of magic}, which is defined as \cite{Liu:2020yso}
\begin{align}
\M_{\text{stab}}(\ket{\psi}) \, = \, - \log F_{\text{stab}}(\ket{\psi}) \, .    
\end{align}
It is easy to see that $\M_{\text{stab}}(\ket{\psi}) = 0$ if and only if $\ket{\psi}$ is a stabilizer state. Note that when the number of qubits $N$ is finite, $\M_{\text{stab}}$ is bounded from above. However, $\M_{\text{stab}}$ is unbounded in the thermodynamic limit ($N \to\infty$).

\subsection{Von Neumann Algebras} \label{sec-algebras}

Consider a $*$-subalgebra of bounded operators on a Hilbert space, $\mathcal{A} \subset B(\mathcal{H})$, that contains the identity operator. A sequence of operators $\{a_n\} \subset \mathcal{A}$ is said to converge to $a \in \A$ in the weak operator topology (WOT) if 
\begin{align}
    \text{WOT : }\quad \lim_{n\to\infty} |\bra{\psi}(a_n-a)\ket{\phi}| = 0
\end{align}
for all $\ket{\phi}, \ket{\psi} \in \mathcal{H}$. The closure of $\mathcal{A} \subset B(\mathcal{H})$ in the WOT defines a von Neumann algebra. \cite{Haag:1996hvx,Takesaki}. 

Another useful topology is the strong operator topology (SOT). A sequence $\{a_n\} \subset \mathcal{A}$ is said to converge to $a\in\A$ in the SOT if
\begin{align}
    \text{SOT : } \quad \lim_{n\to\infty} ||(a_n-a)\ket{\phi}|| = 0
\end{align}
for all $\ket{\phi} \in \mathcal{H}$. The SOT is stronger than the WOT in the sense that convergence in the SOT implies convergence in the WOT. Remarkably, von Neumann's bicommutant theorem asserts that the SOT and WOT closures of $\mathcal{A}$ are equal and coincide with its double commutant. That is, the von Neumann algebra $\mathcal{A}$ satisfies 
\begin{align}
    \mathcal{A}  = \overline{\mathcal{A}}^{\text{WOT}} = \overline{\mathcal{A}}^{\text{SOT}} = \mathcal{A}'' \, ,
\end{align}
where $\mathcal{A}' = \{ a' \in B(\mathcal{H}) | [a' , a] = 0  \, \forall a \in \mathcal{A}\}$ is the commutant of $\mathcal{A}$ \cite{Haag:1996hvx,Takesaki}.

A simple example of a von Neumann algebra is the algebra of operators acting on a subsystem of a bipartite system with a finite-dimensional Hilbert space $\mathcal{H} = \mathcal{H}_A \otimes \mathcal{H}_B$. In particular, $\A = B(\mathcal{H}_A)\otimes \mathbf{1}_B$ is a von Neumann algebra and $\A' = \mathbf{1}_A\otimes B(\mathcal{H}_B)$ is its commutant.

In the above example of finite-dimensional Hilbert spaces, the von Neumann algebras are isomorphic to algebras of finite-dimensional matrices. These von Neumann algebras are said to be of Type I$_d$, where $d$ denotes the dimension of the underlying Hilbert space. In contrast, in infinite dimensions, not all von Neumann algebras acting on a Hilbert space are isomorphic, and they are classified based on the types of projection operators that they admit.

\subsubsection{Projections}

An operator $P \in B(\mathcal{H})$ is a \textit{projection} if $P = P^\dagger = P^2 \, $. Here, we collect and review some properties of projections that we make use of in proving our main results in Sec.~\ref{sec-main-thms}. 

A projection $P\in \A$ defines a von Neumann subalgebra. That is, if $\A \subset B(\h)$ is a von Neumann algebra, then $P\mathcal{A}P \subset B(P\mathcal{H})$ is also a von Neumann algebra \cite{Takesaki}. 
In finite-dimensional algebras, the `size' of a projection $P\in\mathcal{A}$ can be defined in terms of the dimension of the reduced subalgebra $P\mathcal{A}P$. However, in infinite-dimensional algebras, the reduced subalgebra $P\mathcal{A}P$ may also be infinite-dimensional, and hence, there is no absolute notion of the size of a projection. Nevertheless, we can define the relative sizes of two projections and use them to define a notion of a \textit{finite} projection. First, we say that two projections satisfy $P_1 \ge P_2$ if $P_1 P_2 = P_2 P_1 = P_2$, or equivalently if $P_1 - P_2$ is a projection. In other words, $P_1 \ge P_2$ if the image of $P_2$ is a subspace of the image of $P_1$. Secondly, we say that two projections are equivalent, or of the same size, if their subspaces can be isometrically mapped to each other. More precisely, two projections $P, Q \in \mathcal{A}$ are equivalent, denoted by $P \sim Q$, if there exists a partial isometry $V \in \mathcal{A}$ such that $V^\dagger V = P$ and $V V^\dagger = Q$. Finally, a projection $P$ is \textit{finite} if there is no other projection $Q$ satisfying $Q<P$ and $Q\sim P$.

The von Neumann algebras that do not contain non-zero finite projections (i.e., every projection is either zero or infinite) are said to be of Type III. Conversely, if a von Neumann algebra contains a non-zero finite projection, it cannot be of Type III \footnote{Von Neumann algebras that contain finite projections can be further classified into Type I or Type II based on whether they contain \textit{minimal/abelian} projections. This subclassification is not relevant for this work, and hence, we refer interested readers to Refs.~\cite{Sorce:2023fdx,Takesaki,BratteliRobinson1987} for a review.}. We will make use of this criterion in Sec.~\ref{sec-main-thms} to show how infinite magic is a necessary condition for Type III algebras. 

\subsubsection{Functionals on von Neumann algebras}

A functional on a von Neumann algebra $\A$ is a linear map $\phi: \A \to \mathbb{C}$. 
A functional $\phi$ on $\A$ is \textit{positive} if  $\phi(a^\dagger a) \ge 0$. Furthermore, a functional is called a \textit{state functional} if it is positive and normalized, i.e., $\Vert \phi \Vert \equiv \phi(I) = 1$. The proof of our main theorem in Sec.~\ref{sec-main-thms} will make use of properties of functionals on von Neumann algebras and the topology on the space of these functionals. Therefore, we briefly review the important properties that we will need later. 

In a finite-dimensional von Neumann algebra $\A$, any positive (state) functional can be written as $\phi(\cdot) = \text{tr}(\rho \, \cdot)$ for some (normalized) density matrix $\rho$. This state functional is continuous in the sense that if there is a sequence of operators $\{O_m \in \A\}$ that converges to $O\in \A$, then $\lim_{m}\phi(O_m) = \phi(O)$. However, in the infinite-dimensional case, not all functionals are continuous \footnote{\label{example-singular-functional} For example, consider a one-dimensional quantum particle and a normalized wavefunction $\phi_K(x)$ that has support in a small neighborhood $x \in (K-\epsilon,K+\epsilon)$ and a limiting state functional $\phi(\cdot) \equiv \lim_{K\to\infty} \bra{\phi_K}\cdot\ket{\phi_K}$. Clearly, $\phi(P_m) = 0$ for all finite $m$, where $P_m$ is a projection that detects whether a particle is in $|x|<m$. Therefore, $\lim_{m} \phi(P_m) = 0 \ne \phi(\lim_{m}P_m) = \phi(1)=1$.}. The functionals that satisfy $\lim_{m} \phi(O_m) = \phi(\lim_m O_m)$ for any increasing sequence $O_m \in \A$ are called \textit{normal} functionals. Moreover, by the non-commutative version of Lebesgue decomposition, any general positive bounded functional $\phi$ can be uniquely decomposed into a normal part $\phi^{(n)}$ and a non-normal (or singular) part $\phi^{(s)}$ as 
\begin{align}
    \phi = \phi^{(n)} + \phi^{(s)} \, , \label{functional-decomposition}
\end{align}
where both $\phi^{(n)}$ and $\phi^{(s)}$ are positive functionals and where $\Vert \phi\Vert = \Vert \phi^{(n)} \Vert + \Vert\phi^{(s)}\Vert$ \cite{takesaki1958conjugate} (also see \cite[Thm.~10.1.15(iii)]{kadison-ringrose-2}).

Another class of functionals is \textit{faithful} functionals for which $\phi(O^\dagger O) = 0$ implies $O =0$. (In finite dimensions, a functional is faithful if the corresponding density matrix is of full rank.) It is known that for any normal functional, there exists a subalgebra on which it is faithful. More precisely, for any normal positive functional $\phi$ on $\A$, there exists a non-zero (but not necessarily finite) \textit{support projection}, which we denote by $P_{\phi}$, such that $\phi$ is faithful on the subalgebra $P_\phi \A P_\phi$ and $\phi(\cdot) = \phi(P_\phi \, \cdot) = \phi(\cdot \, P_\phi)$ \cite{Takesaki}. 

An important example of a functional on a von Neumann algebra is a \textit{trace}. A trace is a normal and faithful functional on $\A$ such that $\tau(O' O) = \tau(O O')$ for all $O,O'\in\A$. However, a non-trivial (i.e., semifinite) trace is not well-defined for all von Neumann algebras. In particular, a von Neumann factor admits a non-trivial trace if and only if the factor contains finite projections \cite{Takesaki}. The following lemma further establishes the connection between finite projections and traces, and it will play a crucial role in the proof of our main results in Sec.~\ref{sec-main-thms}.
\begin{lemma}\label{lemma-tracial-corner}
    Let $\mathcal{A} \subset B(\mathcal{H})$ be a von Neumann algebra and let $P \in \mathcal{A}$ be a projection. If the subalgebra $P\A P \subset B(P\h)$ admits a trace $\tau$ such that $\tau(P) < \infty$, then $P$ is a finite projection in $\mathcal{A}$.
\end{lemma}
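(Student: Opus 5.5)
We argue by contradiction. Suppose $P$ is not finite in $\A$. Then there is a projection $Q\in\A$ with $Q<P$ and $Q\sim P$. Concretely, there is a partial isometry $V\in\A$ with $V^\dagger V = P$ and $VV^\dagger = Q$. Since $Q\le P$, we have $PQ=QP=Q$. It follows that $V = VV^\dagger V = QV = PQV = PV$ and $V = VP$, so $V = PVP\in P\A P$. Likewise $Q = PQP \in P\A P$, and $P-Q\in P\A P$ is a non-zero projection. Equivalence of $Q$ and $P$ therefore already holds inside the corner algebra $P\A P$, whose unit is $P$. This first step reduces everything to the corner algebra, where the trace $\tau$ lives.

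Next we use the tracial property. Both $V^\dagger$ and $V$ lie in $P\A P$, so $\tau(V^\dagger V) = \tau(VV^\dagger)$, that is, $\tau(P)=\tau(Q)$. Both quantities are finite by the hypothesis $\tau(P)<\infty$. By linearity, $\tau(P-Q) = \tau(P)-\tau(Q) = 0$. The subtraction is legitimate precisely because $\tau(P)<\infty$; this is the only place that hypothesis enters.

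Finally we use faithfulness. Write $P-Q = (P-Q)^\dagger(P-Q)$. Since $\tau$ is faithful on $P\A P$ (part of the definition of a trace recalled above), $\tau((P-Q)^\dagger(P-Q))=0$ forces $P-Q=0$. This contradicts $Q<P$, so $P$ is finite.

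The main point requiring care is the cyclicity step $\tau(V^\dagger V)=\tau(VV^\dagger)$. It needs $V$ to lie in the corner algebra, which the first step guarantees. It also needs the trace property to hold for these operators. This is immediate if $\tau$ is a finite (bounded) trace on $P\A P$, as $\tau(P)<\infty$ implies here, since $P$ is the unit of $P\A P$. If one worries about semifinite traces defined only on a domain, the finiteness of $\tau(P)$ puts all of $P\A P$ in the domain, so no extra argument is needed.
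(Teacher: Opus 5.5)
Your proposal is correct and follows essentially the same route as the paper: place $V$ in the corner $P\A P$, use the trace property to get $\tau(P)=\tau(Q)$, subtract using $\tau(P)<\infty$, and conclude $P=Q$ by faithfulness. The only cosmetic difference is that you obtain $V=PVP$ from the partial-isometry identity $VV^\dagger V=V$, whereas the paper derives $V(1-P)=0$ and $(1-P)V=0$ from the vanishing of $(1-P)V^\dagger V(1-P)$ and $(1-P)Q(1-P)$.
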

\begin{proof}
    See Appendix~\ref{appendix-proof-trace}.
\end{proof}

Finally, we discuss a useful topology on the space of state functionals on $\A$, called the weak-$^*$ topology. A sequence of functionals $\phi_n$ is said to converge to $\phi$ in the weak-$^*$ topology if 
\begin{align}
    \text{Weak-$^*$ : } \lim_{n\to\infty} \phi_n(O) = \phi(O) \, ,
\end{align}
for all $O \in \A$. An elementary result in functional analysis, called the Banach-Alaoglu theorem \cite{BratteliRobinson1987}, asserts that the space of state functionals on $\A$ is compact in this topology. This means that if we have a sequence of state functionals $\phi_n$, there exists a subsequence that converges in the weak-$^*$ topology to some state functional.

\subsection{Inductive Limits} \label{sec-inductive-limit}

As discussed in Sec.~\ref{intro}, we are interested in the infinite-dimensional Hilbert spaces and von Neumann algebras that arise from the thermodynamic or continuum limits of lattice models. A formal method to describe these limits is through the framework of inductive systems \cite{Chemissany:2025vye,kadison-ringrose-2}, which we briefly review here. 

Suppose we have an infinite sequence of nested Hilbert spaces $\h_1 \subset \h_2 \subset \cdots \, $ and let $(\cdot, \cdot)_{N}$ denote the inner product on $\h_N$. We introduce an embedding map $\iota_{N,M} : \h_N \to \h_M$ for $M > N$ that specifies how a smaller Hilbert space $\h_N$ is embedded inside a bigger space $\h_M$. This map is compatible with the inner product, i.e., $\left( \iota_{N,M}(\ket{\phi}_N), \iota_{N,M}(\ket{\psi}_N)\right)_{M} = \left( \ket{\phi}_N, \ket{\psi}_N\right)_{N}$ for all $\ket{\phi}_N, \ket{\psi}_N \in \h_N$, and it satisfies the consistency condition $\iota_{N,L} = \iota_{M,L}\circ\iota_{N,M}$ for $N<M<L$.

The infinite sequence and the embedding maps together define an inductive system, which leads to an inductive limit Hilbert space as follows. We first define a Hilbert space $[\h_N]$ which consists of the vectors $[\ket{\psi}_N] \equiv \lim_{M\to\infty}\iota_{N,M}(\ket{\psi}_N)$ for all $\ket{\psi}_N\in\h_N $. Then we define an inner product space $\h_{\text{union}} \equiv \cup_{j=1}^{\infty} [\h_{j}]$ with an inner product $([\ket{\psi}_N] , [\ket{\phi}_M]) \, = \, \left(\iota_{N,M}(\ket{\psi}_N) , \ket{\phi}_M\right)_M $ if $M\ge N$ and $([\ket{\psi}_N] , [\ket{\phi}_M]) \, = \, \left(\ket{\psi}_N , \iota_{M,N}(\ket{\phi}_M)\right)_N $ if $M< N$. Completing this space with the norm induced by the inner product defines a Hilbert space, which we denote by $\h$. This Hilbert space is the inductive (or direct) limit of the infinite sequence $\h_1\subset\h_2\subset\cdots$ with the embedding maps $\iota_{N,M}$ \cite{Chemissany:2025vye,kadison-ringrose-2}.

Note that the inductive limit Hilbert space $\h$ depends on both the sequence of Hilbert spaces and the embedding maps. As an example, suppose $\h_N$ is a Hilbert space of $2N$ qubits and suppose the embedding maps $\iota_{N,N+1}$ append a fixed entangled state $\ket{\lambda} = \sqrt{\lambda}\ket{00}+\sqrt{1-\lambda}\ket{11}$ to the additional $2$ qubits, i.e., $\iota_{N,N+1}(\ket{\psi}_N) = \ket{\psi}_N \otimes \ket{\lambda}$ for all $\ket{\psi}_N \in \h_N$. As a result, a dense subspace of $\h$ consists of states that have a tail resembling an infinite tensor product of  $\ket{\lambda}$.

Next, we discuss the algebra of operators acting on $\h$. Suppose we have a sequence of nested algebras $\A_{1} \subset \A_2 \subset \cdots$, where $\A_N \subset B(\h_N)$. Moreover, suppose we have embedding maps $\gamma_{N,M}: \A_N \to \A_M$ which are $*$-homomorphisms and are compatible with the embedding maps $\iota_{N,M}$, that is,
\begin{align}
    \gamma_{N,M}(O_N) \iota_{N,M}(\ket{\psi}_N) \, = \, \iota_{N,M}\left( O_N\ket{\psi}_N \right) \, , \label{eq-map-compatibility}
\end{align}
for all $O_N \in \A_N$ and $\ket{\psi}_N \in \h_N$. Since these embedding maps are $*$-homomorphisms, it follows that if $P_N$ and $Q_N$ are projections in $\A_N$, then $\gamma_{N,M}(P_N)$ and $\gamma_{N,M}(Q_N)$ are projections in $\A_M$. Furthermore, if $P_N \le Q_N$, then $\gamma_{N,M}(P_N) \le \gamma_{N,M}(Q_N)$.

Similarly to the inductive limit Hilbert space, we define an inductive limit algebra as follows. We first define a union algebra $\A_{\text{union}} \equiv \cup_{j=1}^{\infty} [\A_j]$, where $[\A_N]$ is an algebra of operators that act on $\h_{\text{union}}$ as $[O_N] [\ket{\psi}_M] = [\gamma_{N,M}(O_N) \ket{\psi}_M]$ if $M\ge N$ and as $[O_N] [\ket{\psi}_M] = [O_N \iota_{M,N}(\ket{\psi}_M)]$ if $M<N$. Then taking the double commutant of $\A_{\text{union}}$ in $\h$ defines a von Neumann algebra $\A \, = \, (\A_{\text{union}})'' \subset B(\h)$.

As an example, let us consider the same setup as before where $\h_N$ is a Hilbert space of $2N$ qubits and embedding maps between Hilbert spaces append the fixed state $\ket{\lambda}$ to the new qubits. Let us take $\A_N \subset B(\h_N)$ to be the algebra of operators acting non-trivially on the first qubit of each of the $N$ entangled pairs and let the embedding maps $\gamma_{N,M}$ append the identity $I$ to the new qubits. In this case, $\A_{\text{union}}$ is called a \textit{quasi-local algebra} as it acts non-trivially on finitely many qubits. Moreover, it is known in this setup that the von Neumann algebra $\A = (\A_{\text{union}})''$ is of Type III unless $\lambda \in \{0,1/2,1\}$ \cite{powers,araki1968classification}.  

Interestingly, we note that the values $\lambda \in \{0,1/2,1\}$ are precisely the values for which the state $\ket{\lambda}$ is a stabilizer state. Since a dense subspace of $\h$ consists of states whose tails are an infinite tensor product of $\ket{\lambda}$, we note that the states in this dense subspace have infinite magic if $\lambda \not\in \{0,1/2,1\}$. This suggests an intricate connection between infinite magic and Type III algebras. We make this connection rigorous in the next section.

\section{Main Results} \label{sec-main-thms}

In this section, we present the main contribution of this work and prove a connection between Type III von Neumann algebras and non-stabilizer states. To be precise, we consider an infinite-dimensional Hilbert space $\h$ that arises as a thermodynamic limit of a lattice shown in Fig.~\ref{fig-lattice}. In particular, $\h$ is the inductive limit of the sequence of Hilbert spaces $\h_1 \subset \h_2 \subset \cdots $ with embedding maps $\iota_{N,M}: \h_N \to\h_M$. Moreover, we consider a von Neumann algebra $\A\subset B(\h)$ that corresponds to the observables localized in the `right' region $R$ shown in Fig.~\ref{fig-lattice}. This algebra is also the inductive limit of the sequence $\A_1 \subset\A_2 \subset \cdots$, with embedding maps $\gamma_{N,M}:\A_N \to\A_M$ which are compatible with the embedding maps $\iota_{N,M}$; see Eq.~\eqref{eq-map-compatibility}.

As we reviewed in Sec.~\ref{sec-inductive-limit}, the maps $\iota_{N,M}$ that specify how a smaller Hilbert space embeds into a larger Hilbert space in the inductive sequence determine the macroscopic properties of the states that live in $\h$ and the bounded operators that are allowed in $\A$. Let us consider embedding maps under which the magic in states remains bounded from above. In this case, our goal is to prove that the algebra cannot be Type III. 

Our approach in proving this will be to consider a sequence of normal state functionals on $\A$. As we reviewed in Sec.~\ref{sec-algebras}, the space of state functionals is weak-$^*$ compact, which means that there exists a subsequence of these state functionals that converges in the weak-$^*$ topology to some state functional $\phi$ on $\A$. However, the limiting state functional is not guaranteed to be a normal functional (see footnote~\ref{example-singular-functional} for an example.) Nevertheless, this functional has a unique normal part $\phi^{(n)}$ as in Eq.~\eqref{functional-decomposition}. The following lemma establishes some properties of this state functional and its normal part that we will need to show that $\A$ cannot be of Type III.

\begin{lemma}\label{main-lemma}
    Consider an inductive system of Hilbert spaces and von Neumann algebras with embedding maps $\iota_{N,M}: \h_N\to\h_M$ for $M>N$ and consider a sequence of states $\{\ket{\phi}_N \in \h_N\}$. This defines a sequence of state functionals $\phi_N(\cdot) \equiv [\mbox{}_N\bra{\phi}] \cdot [\ket{\phi}_N]$ on $\A$ such that a subsequence $\{\phi_{\N}\}$ converges in the weak-$^*$ topology to a state functional $\phi$ on $\A$.
    \begin{enumerate}
        \item If the sequence of states is such that the overlap of $[\ket{\phi}_N]$ with a fixed reference state $\ket{\psi}\in\h$ is bounded from below by $\delta > 0$ for all $N$, then the normal part of $\phi$, denoted by $\phi^{(n)}$, is non-zero.
        \item If the states $\ket{\phi}_{N} \in \h_N$ are stabilizer states, then the restriction of $\phi$ on $[\A_{\N}]$ defines a decreasing sequence of projections that converge in the SOT to a projection $P\in \A$.
        \item If both of the above assumptions are true, then the projection $P$ is non-zero and $\phi^{(n)}(P) = \phi^{(n)}(1) < \infty$. Furthermore, $\phi^{(n)}$ satisfies $\phi^{(n)}(O'O) = \phi^{(n)}(OO')$ for any $O,O' \in P\A P$.
    \end{enumerate} 
\end{lemma}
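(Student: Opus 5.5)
The plan is to prove the three items in order, using only the Lebesgue decomposition \eqref{functional-decomposition}, the flat-spectrum property of Lemma~\ref{lemma-stab-proj}, and normality of $\phi^{(n)}$.

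\textbf{Item 1.} I will argue by contradiction. Suppose $\phi^{(n)}=0$, so that $\phi$ is singular. I will use the standard characterization of singular states (Takesaki): there is an orthogonal family of projections $\{E_\alpha\}\subset\A$ with $\sum_\alpha E_\alpha=1$ and $\phi(E_\alpha)=0$ for every $\alpha$. Because the partial sums converge to $1$ in the SOT, I can choose a finite sum $E$ with $\Vert(1-E)\ket{\psi}\Vert<\delta/2$, and still $\phi(E)=0$. Weak-$^*$ convergence gives $\phi_{\N}(E)=\Vert E[\ket{\phi}_{\N}]\Vert^2\to 0$. The Cauchy--Schwarz inequality then gives
\begin{align}
\delta\le |\braket{\psi}{[\phi_{\N}]}|\le \Vert E[\ket{\phi}_{\N}]\Vert+\Vert (1-E)\ket{\psi}\Vert ,
\end{align}
and the right-hand side tends to something below $\delta/2$, which is a contradiction. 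Hence $\phi^{(n)}\neq 0$.

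\textbf{Item 2.} Fix $K$. For $\N\ge K$, the restriction of $\phi_{\N}$ to $[\A_K]$ is the reduced state of the stabilizer state $\ket{\phi}_{\N}$ on the qubits of $R_K$. This presumes $\gamma$ appends identities, so $[\A_K]$ is the full algebra on $R_K$. By Lemma~\ref{lemma-stab-proj}, this reduced state equals $\text{tr}(\rho\,\cdot)$ with $\rho=\Pi/\text{rank}(\Pi)$ for some projection $\Pi$. Only finitely many such flat stabilizer marginals exist on $R_K$, so along a further subsequence the marginal is eventually constant. A diagonal argument over $K$ then shows that $\phi|_{[\A_K]}=\text{tr}(\Pi_K\,\cdot)/\text{rank}\,\Pi_K$ for every $K$. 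I set $P_K=[\Pi_K]\in\A$. Consistency of the marginals means $\text{tr}_{R_{K+1}\setminus R_K}\rho_{K+1}=\rho_K$, so the support of $\rho_{K+1}$ lies in $\text{supp}(\rho_K)\otimes\h$, i.e.\ $P_{K+1}\le P_K$. A decreasing sequence of projections converges in the SOT to its infimum $P\in\A$.

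\textbf{Item 3.} By construction $\phi(1-P_K)=0$. Positivity of both parts of the decomposition then gives $\phi^{(n)}(1-P_K)=0$. Since $1-P_K\uparrow 1-P$, normality of $\phi^{(n)}$ yields $\phi^{(n)}(1-P)=0$, i.e.\ $\phi^{(n)}(P)=\phi^{(n)}(1)\le 1$. If $P$ were zero this would force $\phi^{(n)}=0$, contradicting Item~1, so $P\neq 0$.

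For traciality, I first note that $\phi$ restricted to $P_K[\A_K]P_K$ is $\text{tr}(\Pi_K\,\cdot)/\text{rank}$, which is tracial. The aim is to show that $\phi^{(n)}$ is invariant under $\text{Ad}(u)$ for every unitary of the form $u=P_KvP_K+(1-P_K)$ with $v\in[\A_K]$. Uniqueness of the Lebesgue decomposition, together with the fact that $\text{Ad}(u)$ preserves normality and singularity, would transfer this invariance from $\phi$ to $\phi^{(n)}$. After that I will compress to $P$ and extend from $\cup_K P[\A_K]P$ to $P\A P$. For the extension I will use Kaplansky density, which gives bounded strong$^*$ approximants, together with $\sigma$-weak continuity of $\phi^{(n)}$.

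\textbf{Main obstacle.} The hard step is the $\text{Ad}(u)$-invariance of $\phi$ on all of $\A$, not only on $[\A_K]$. For $x\in[\A_M]$ with $M>K$, one has $\phi(uxu^\dagger)=\text{tr}(u^\dagger\rho_M u\,x)$, and $\rho_M\propto\Pi_M$ need not commute with $u$. My first attempt will use $\Pi_M\le\Pi_K\otimes 1$ and the stabilizer-group structure of Lemma~\ref{lemma-stab-proj}: I will write $\Pi_M$ as an average over stabilizers and try to show that, after compressing by $P_K$, it splits as a product of a factor on $R_K$ and a factor on its complement. If that fails, the fallback is to prove traciality directly for $\phi^{(n)}$ on $P_K[\A_K]P_K$, which would suffice for the Type III application via Lemma~\ref{lemma-tracial-corner}.
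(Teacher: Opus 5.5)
Your items 1 and 2 and the first half of item 3 are sound. Item 1 takes a different but valid route: you use the characterization of a singular state by an orthogonal family of $\phi$-null projections summing to $1$. The paper instead combines $\Vert\phi-\psi\Vert\le 2\sqrt{1-\delta^2}<2$ with $\Vert\phi-\psi\Vert=2$ for singular $\phi$. In item 2 you also justify $P_{K+1}\le P_K$, which the paper only asserts.

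\textbf{The gap.} The genuine gap is the traciality claim, and your proposed route cannot close it.

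\emph{Invariance fails locally.} The invariance $\phi\circ\mathrm{Ad}(u)=\phi$ fails in general already on the quasi-local algebra. Label the qubits $r_k\in R$, $l_k\in L$, and take $\ket{\phi}_N=\ket{\Phi^+}_{r_1r_2}\otimes\ket{0\cdots0}$. Then $\Pi_1=1$ and $\Pi_2=\ket{\Phi^+}\bra{\Phi^+}$, which does not commute with $u=X_{r_1}$. So the splitting you hope for is false, even though the lemma's conclusion holds in this example ($P$ has rank one).

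\emph{Invariance does not lift.} Even invariance on every $[\A_M]$ would not transfer to $\A$. A non-normal $\phi$ is not determined by its values on the weakly dense subalgebra $\bigcup_M[\A_M]$. The example below has $\phi\circ\mathrm{Ad}(X_{r_1})=\phi$ on every $[\A_M]$, yet the normal parts differ.

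\emph{The paper does not help here.} Its argument at this point is: $\phi$ and $\phi\circ\mathrm{Ad}(U_{\N})$ agree on $[\A_{\N}]$, so by uniqueness of the Lebesgue decomposition their normal parts agree there. That is precisely the inference you rightly distrust, so you cannot borrow it.

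\textbf{A counterexample for limit points.} If $\phi$ is only a weak-$^*$ limit point along a subnet (which is all Banach--Alaoglu provides), then both the traciality claim and your fallback are false. Let the embeddings append $\ket{0}_{r_k}\ket{0}_{l_k}$, so $\A=B(\h_R)\otimes 1$. Take the stabilizer states $\ket{\phi}_N=\frac{1}{\sqrt2}(\ket{00}+\ket{11})_{r_1r_N}\otimes\ket{0\cdots0}$ and the reference $\ket{\psi}=[\ket{+}_{r_1}\ket{0}_{l_1}]$; the overlap is $1/2$ for all $N$.

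Write $[\ket{\phi}_N]=(\eta+w_N)/\sqrt2$, with $\eta=[\ket{0}_{r_1}\ket{0}_{l_1}]$ and $w_N=X_{r_1}X_{r_N}\eta$ orthonormal. Every limit point is then $\phi=\frac12\bra{\eta}\cdot\ket{\eta}+\frac12\chi$, where $\chi$ vanishes on finite-rank projections and is therefore singular. No subsequence converges in this example.

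Hence $\phi^{(n)}=\frac12\bra{\eta}\cdot\ket{\eta}$. Meanwhile $\Pi_K=1_{r_1}\otimes\ket{0\cdots0}\bra{0\cdots0}$ and $P\A P\cong M_2$. The restriction of $\phi^{(n)}$ is $\frac12\ket{0}\bra{0}_{r_1}$, which is not tracial on $P\A P$, nor on $P_1[\A_1]P_1=[\A_1]$.

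\textbf{What rescues the literal statement.} The statement posits a genuinely convergent subsequence, and that is the idea you are missing. By weak sequential completeness of the predual (Akemann), a weak-$^*$ convergent sequence of normal states on $\A$ has a normal limit. So $\phi^{(n)}=\phi$, which also makes item 1 automatic. Traciality on each $P_K[\A_K]P_K$ then follows immediately from $\phi|_{[\A_K]}\propto\mathrm{tr}(\Pi_K\,\cdot)$. It extends to $P\A P$ by compressing bounded strong$^*$ approximants $x_j\in[\A_{K_j}]$ to $P_Kx_jP_K$, and using normality as $K\to\infty$ and then $j\to\infty$.
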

\begin{proof}
    See Appendix~\ref{appendix-proof}.
\end{proof}

As alluded to earlier, we consider embedding maps under which the magic in states remains bounded from above. In particular, if we take an arbitrary state in $\h_N$ for some $N$, then the magic in its image in $\h_M$ for $M>N$ is bounded from above by $\M$ for all $M\ge N$. We show that the assumption of bounded magic translates to the assumptions of Lemma~\ref{main-lemma}, which we then use to prove the following theorem. 

\begin{theorem}\label{thm-2}
    Consider an inductive system of Hilbert spaces and von Neumann algebras with embedding maps $\iota_{N,M}: \h_N\to\h_M$ for $M>N$ as above. For any given state $\ket{\psi}_1 \in \h_1$, let us quantify the magic in the states $\iota_{1,N}(\ket{\psi}_1) \in \h_N$ for $N\ge 1$ by the min-relative entropy of magic, $\M_N = \M_{\text{stab}}\left(\iota_{1,N}(\ket{\psi}_1)\right)$. If the sequence $\M_1, \M_2, \cdots$ is bounded from above by $\M < \infty$, then the von Neumann algebra $\A$ contains a non-zero finite projection. Consequently, $\A$ cannot be of Type III.
\end{theorem}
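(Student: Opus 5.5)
The plan is to manufacture, from the bounded-magic hypothesis, a sequence of stabilizer states satisfying both hypotheses of Lemma~\ref{main-lemma}, and then to pass the resulting projection through Lemma~\ref{lemma-tracial-corner}.

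\textbf{Choosing the states.} Fix a normalized $\ket{\psi}_1\in\h_1$ and put $\ket{\psi}\equiv[\ket{\psi}_1]\in\h$. For each $N$, let $\ket{\phi}_N\in\text{STAB}_N$ be a maximizer in the definition of $F_{\text{stab}}(\iota_{1,N}(\ket{\psi}_1))$; the maximum exists because $\text{STAB}_N$ is finite. The bound $\M_N\le\M$ gives
\begin{align}
    \bigl|\bigl([\ket{\phi}_N],\ket{\psi}\bigr)\bigr|^2 \, = \, \bigl|\bigl(\ket{\phi}_N,\iota_{1,N}(\ket{\psi}_1)\bigr)_N\bigr|^2 \, = \, e^{-\M_N} \, \ge \, e^{-\M} \, .
\end{align}
Here the first equality is just the definition of the inner product on $\h_{\text{union}}$. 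So the overlaps with the fixed reference vector $\ket{\psi}$ are bounded below by $\delta=e^{-\M/2}>0$, uniformly in $N$. Each $\ket{\phi}_N$ is a stabilizer state by construction, so both hypotheses of Lemma~\ref{main-lemma} hold. Banach--Alaoglu then supplies a weak-$^*$ convergent subsequence $\phi_{\N}\to\phi$.

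\textbf{Applying the lemma.} Part 3 of Lemma~\ref{main-lemma} yields a non-zero projection $P\in\A$ with $\phi^{(n)}(P)=\phi^{(n)}(1)<\infty$. Part 1 guarantees that $\phi^{(n)}\neq0$. The same part also shows that $\phi^{(n)}$ is tracial on $P\A P$.

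\textbf{Building a trace.} Set $\tau\equiv\phi^{(n)}|_{P\A P}$. It is positive, normal, tracial, and satisfies $\tau(P)<\infty$. To invoke Lemma~\ref{lemma-tracial-corner} I also need faithfulness, which is part of the paper's definition of a trace, and I expect this to be the main obstacle. Two routes suggest themselves. One could show directly that $P$ is contained in the support projection of $\phi^{(n)}$, for instance because the flat-spectrum structure of Lemma~\ref{lemma-stab-proj} makes $P$ the limit of the supports of the reduced stabilizer states. The safer route is to pass to the support: let $Q=P_{\phi^{(n)}}$ be the support projection of $\phi^{(n)}$. Then $\phi^{(n)}(1-P)=0$ gives $Q\le P$, and $Q\neq0$ because $\phi^{(n)}\neq0$. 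Moreover $\phi^{(n)}$ is faithful on $Q\A Q$, and it remains tracial there because $Q\A Q\subset P\A P$. Thus $\tau=\phi^{(n)}|_{Q\A Q}$ is a normal, faithful, tracial functional with $\tau(Q)=\phi^{(n)}(1)<\infty$.

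\textbf{Conclusion.} Lemma~\ref{lemma-tracial-corner} now shows that $Q$, and equally $P$ if faithfulness on all of $P\A P$ can be established, is a non-zero finite projection in $\A$. Type III algebras contain no non-zero finite projections, so $\A$ is not of Type III. The only other point needing care is checking that normality and traciality restrict correctly to the corner $Q\A Q$, but both properties pass to subalgebras immediately.
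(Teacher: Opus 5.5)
Your proposal follows the paper's proof essentially step for step: bounded $\M_{\text{stab}}$ gives stabilizer states $\ket{\phi}_N$ whose images have overlap at least $e^{-\M/2}$ with $\ket{\psi}$, Lemma~\ref{main-lemma} gives a non-zero $P$ with $\phi^{(n)}(P)=\phi^{(n)}(1)$ and $\phi^{(n)}$ tracial on $P\A P$, and passing to the support projection $P_{\phi^{(n)}}\le P$ supplies the faithfulness needed for Lemma~\ref{lemma-tracial-corner}. The differences are cosmetic: you get $P_{\phi^{(n)}}\le P$ from the minimality of the support projection rather than the paper's computation $P_{\phi^{(n)}}(1-P)P_{\phi^{(n)}}=0$, and traciality comes from Part~3 of Lemma~\ref{main-lemma}, not Part~1 as you wrote (your appeal to Banach--Alaoglu for a convergent \emph{subsequence} is taken over from the paper, though strictly that theorem only guarantees a convergent subnet).
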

\begin{proof}
    Let us denote the states $\iota_{1,N}(\ket{\psi}_1) \in \h_N$ by $\ket{\psi}_N$ and the state $[\ket{\psi}_1] = \lim_{N\to\infty} \ket{\psi}_N$ by $\ket{\psi}$. Since the min-relative entropy of magic of the states $\ket{\psi}_N$ is bounded from above by $\M$ for all $N$, we deduce that there exist stabilizer states $\ket{\phi}_N \in \h_N$ such that $|\mbox{}_N\braket{\psi}{\phi}_N| \ge \delta = e^{-\M/2} > 0.$ Since the inclusion maps are compatible with the inner product, we get $|\braket{\psi[}{\phi}_N]| = |\mbox{}_N\braket{\psi}{\phi}_N| \ge \delta$.

    Thus, the sequence of states $\ket{\phi}_N \in \h_N$ satisfies the assumptions of Lemma~\ref{main-lemma}. Therefore, there exists a non-zero projection $P\in \A$ and a normal functional $\phi^{(n)}$ such that $\phi^{(n)}(P) = \phi^{(n)}(1)<\infty$ and that $\phi^{(n)}(OO') = \phi^{(n)}(O'O) $. If we assume that $\phi^{(n)}$ is faithful on $P\A P$, then $\phi^{(n)}$ would be a trace on $P\A P$, and Lemma~\ref{lemma-tracial-corner} would immediately imply that $P$ is a finite projection in $\A$. However, there is no justification for assuming $\phi^{(n)}$ to be a faithful functional on the subalgebra $P\A P$. Nevertheless, as we reviewed in Sec.~\ref{sec-algebras}, a normal positive functional $\phi^{(n)}$ is faithful on the subalgebra $P_{\phi^{(n)}} \A P_{\phi^{(n)}}$, where $P_{\phi^{(n)}}$ is the support projection of $\phi^{(n)}$. Therefore, our approach now is to show how Lemma~\ref{lemma-tracial-corner} implies that the support projection $P_{\phi^{(n)}}$ is finite.
    
    To do this, we observe that the fact $\phi^{(n)}(P) = \phi^{(n)}(1) < \infty$ and the definition of the support projection implies $\phi^{(n)}\left(P_{\phi^{(n)}}P P_{\phi^{(n)}}\right) = \phi^{(n)}\left(P_{\phi^{(n)}}\right) < \infty$. Furthermore, it follows from this that     $\phi^{(n)}\left(P_{\phi^{(n)}}(1-P)P_{\phi^{(n)}}\right) = 0 $. Since $P_{\phi^{(n)}}(1-P)P_{\phi^{(n)}}$ is a positive operator in $P_{\phi^{(n)}}\A P_{\phi^{(n)}}$, the faithfulness of $\phi^{(n)}$ on $P_{\phi^{(n)}}\A P_{\phi^{(n)}}$ implies $P_{\phi^{(n)}}(1-P)P_{\phi^{(n)}} = 0 $, or equivalently $P_{\phi^{(n)}} \le P$.

    A direct consequence of $P_{\phi^{(n)}} \le P$ is that $P_{\phi^{(n)}} \A P_{\phi^{(n)}} \subseteq P \A P$. Therefore, we deduce from above that $\phi^{(n)}(OO') = \phi^{(n)}(OO')$ for any  $O,O' \in P_{\phi^{(n)}} \A P_{\phi^{(n)}}$, and so $\phi^{(n)}$ is a trace on $P_{\phi^{(n)}} \A P_{\phi^{(n)}}$. Moreover, since $\phi^{(n)}(P_{\phi^{(n)}}) < \infty$, Lemma~\ref{lemma-tracial-corner} implies $P_{\phi^{(n)}} \in \A $ is a finite projection.

    We have thus shown that the von Neumann algebra $\A$ contains a non-zero finite projection, and hence, $\A$ cannot be of Type III.
\end{proof}

This theorem implies that Type III algebras are not possible in the inductive limit if the states in the dense subspace $\h_{\text{union}}\subset \h$ contain bounded magic. This is the main result of this paper.

\section{Discussion} \label{sec-discussion}

In this paper, we have developed a connection between a Type III von Neumann algebra acting on some Hilbert space and the amount of magic in the states of this Hilbert space. In particular, we have considered infinite-dimensional systems that arise as a thermodynamic limit of finite lattice models, and have used the formalism of inductive systems to define the limiting Hilbert space $\h$ and a von Neumann subalgebra of bounded operators $\A\subset B(\h)$. The states and the operators that survive the thermodynamic limit (inductive limit) are determined by the embedding maps that specify how a finite-dimensional space/algebra embeds inside $\h$ and $\A$, respectively. We have thus shown that a necessary condition for the von Neumann algebra $\A$ to be of Type III is that the states in a dense subspace $\h_{\text{union}} \subset \h$ contain an infinite amount of magic. 

We now discuss some implications and extensions of our work:

\textit{CFTs are magical: } The study of magic in the context of quantum field theories was initiated in Ref.~\cite{White:2020zoz}, which considered a quantum spin chain (specifically, the $\mathbb{Z}_3$ Potts model at its critical point) that is known to flow to a conformal field theory (CFT) in the IR. Through numerical calculations, it was observed that the magic in the ground state of this spin chain scales with the size of the model and is expected to diverge in the thermodynamic limit. This observation led to the conclusion that conformal field theories are inherently magical, as their ground states have infinite magic \cite{White:2020zoz}. 

We now argue that our result provides a theoretical explanation for this numerical observation. Let us consider a sequence of spin chain models indexed by the number of spins, $N$, and let $\ket{\Omega}_N$ and $\h_N$ be the ground state and the Hilbert space of the $N^{\text{th}}$ model, respectively. The inductive system that leads to the conformal field theory in the limit involves embedding maps that map the ground state of a smaller spin chain to that of a larger one. Therefore, if the magic in the ground states $\ket{\Omega}_N$ did not grow with $N$, our results would immediately imply that the algebra of operators localized to a subregion is not Type III. This would yield a contradiction, as the local operator algebras in quantum field theories are known to be Type III von Neumann algebras. Therefore, the unbounded growth of magic in the ground states of critical spin chains is a necessary condition for the emergence of Type III von Neumann algebras in the thermodynamic limit.

\textit{Non-local magic: } Our focus in this work has been on the magic of the overall state, which measures the total number of non-Clifford gates needed to prepare it. However, there is a finer notion of magic for multi-partite systems, called non-local magic \cite{backreaction-magical}. For instance, for a quantum state $\ket{\psi}_{AB}$ of a bipartite system $\h_A\otimes\h_B$, the non-local magic is defined as the magic of the state $U_{A}U_{B} \ket{\psi}_{AB}$, minimized over all possible unitary operators $U_{A}$ and $U_B$. These unitaries are not necessarily Clifford, and hence, non-local magic captures the non-Clifford gates that generate entanglement between systems $A$ and $B$. 

We now argue that our result can easily be generalized to non-local magic. Consider the lattice system introduced in Fig.~\ref{fig-lattice}, where each lattice $\Lambda_N$ is split into a right and a left part, $\Lambda_N = R_N \cup L_N$. Consequently, the Hilbert space $\h_N$ factorizes as $\h_N = \h_{R_N}\otimes \h_{L_N}$ for all $N$. We claim that if the stabilizer states $\ket{\phi}_N \in \h_N$ in Lemma~\ref{main-lemma}~(2) are replaced with states of the form $\ket{\tilde{\phi}}_N = U_{N}U'_N \ket{\phi}_N$ for any unitaries $U_N \in \A_N = B(\h_{R_N})$ and $U'_N \in \A'_N = B(\h_{L_N})$, the conclusion of Lemma~\ref{main-lemma}~(2) remains unchanged. This is because the reduced states of $\ket{\tilde{\phi}}_N$ on $\h_{R_N}$ remain projectors, exactly like the reduced states of stabilizer states $\ket{\phi}_N$. This suggests a stronger result: a necessary condition for the emergence of Type III algebras is infinite non-local magic.

\textit{Type III$_1$ vs Type III$_\lambda$ algebras: } Type III von Neumann algebras can be further classified into Type III$_1$ or Type III$_\lambda$ algebras for $0<\lambda<1$. Distinguishing between these subtypes is of great physical interest, as Type III$_1$ factors (algebras with trivial center) are relevant for quantum field theories \cite{Haag:1996hvx} and for mixing and ergodic systems \cite{Emch:1994jt,Marrakchi:2023fth,Ouseph:2023juq,Gesteau:2023rrx,Furuya:2023fei}. Moreover, Type III$_1$ factors naturally emerge in the large $N$ limit of the AdS/CFT correspondence and explain the emergence of time, the event horizon, and bulk spacetime \cite{Faulkner:2018faa,DeBoer:2019kdj,Leutheusser:2021frk,Leutheusser:2021qhd,Lashkari:2024lkt}, and are a prerequisite for the calculation of the gravitational entropy in the $1/N$ correction \cite{witten,Chandrasekaran:2022eqq}. Even though our result provides a necessary condition for the emergence of Type III algebras in the thermodynamic limit, it is agnostic to the specific subtype. This is because our approach relies on the Murray-von Neumann classification of von Neumann algebras in terms of projections, which cannot differentiate between subtypes of Type III algebras. We expect that methods based on asymptotic ratio sets \cite{araki1968classification} or Connes' classification using modular flows \cite{Connes1973} may lead to stronger results. 

\section*{Acknowledgement}
We would like to thank Chris Akers, Oliver DeWolfe, Nima Lashkari, and Johannes Reinking for useful discussion and feedback. This work was supported by the Heising-Simons Foundation under Grant 2024-4848.

\appendix

\section{Proof of Lemma~\ref{lemma-tracial-corner}} \label{appendix-proof-trace}

\begin{proof}
    
To prove this lemma, let us consider a projection $Q \in \A$ such that $Q\le P$ and $Q \sim P$. Then we will show that under the assumptions of the lemma, the only possible $Q$ satisfying these two properties is $Q=P$, and hence, $P\in \A$ is a finite projection.

First, we recall that $Q\sim P$ implies that there exists a partial isometry $V\in\A$ such that $P = V^\dagger V$ and $Q = VV^\dagger$. Next, we note that the condition $Q \le P$ implies that $Q \in P\A P$. Furthermore, it implies that $V \in P\A P$. To see this, we write
\begin{align}
    0 = (1-P)P(1-P) = (1-P)V^\dagger V(1-P) = \left( V(1-P) \right)^\dagger \left(V(1-P)\right) \, ,
\end{align}
and hence, $V(1-P) = 0$. Repeating the same reasoning for $0 = (1-P)Q(1-P)$ yields $(1-P)V = 0$. By combining these two relations, we get $V = PVP$, which implies $V \in P\A P$.

Now since $V\in P\A P$ and since $\tau$ is a trace on $P\A P$, we get
\begin{align}
    \tau(P) = \tau(V^\dagger V) = \tau(V V^\dagger) = \tau(Q) \, .
\end{align}
Moreover, by assumption, $\tau(P) < \infty$. Therefore, we get
\begin{align}
\tau(P-Q) = \tau(P)-\tau(Q) = 0 \, .
\end{align}
Since $Q \le P$, we know that $P-Q$ is a positive operator in $P\A P$. Since $\tau$ is faithful on $P\A P$, $\tau(P-Q) = 0$ implies $P-Q = 0$.

Therefore, the only $Q \in \A$ such that $Q\in P$ and $Q\sim P$ is $Q=P$. Therefore, $P\in \A$ is a finite projection.
\end{proof}

\section{Proof of Lemma~\ref{main-lemma}} \label{appendix-proof}

\begin{proof}

Consider a subsequence of functionals $\{\phi_{\N}\}$ that converges to a functional $\phi$ on $\A$ in the weak-$^*$ topology. In particular, for any $O\in \A$, we have 
    \begin{align}
        \phi(O) = \lim_{\N\to\infty} \phi_{\N}(O) = \lim_{\N\to\infty} [\mbox{}_{\N}\bra{\phi}] O [\ket{\phi}_{\N}] \, .\label{eq-weak-star-limit}
    \end{align}
As reviewed in Sec.~\ref{sec-algebras}, this functional can be uniquely decomposed into a positive normal part $\phi^{(n)}$ and a positive singular part $\phi^{(s)}$. That is, $\phi = \phi^{(n)} + \phi^{(s)}$.

\textit{Part $1$}: Let us define a normal functional $\psi(\cdot) = \bra{\psi} \cdot \ket{\psi}$ where $\ket{\psi} \in \h$ is the reference state. We consider the \textit{trace distance} between state functionals $\phi$ and $\psi$, which is defined as $\Vert \phi - \psi\Vert = \sup_{O\in\A ; \Vert O\Vert \le 1} |\phi(O) - \psi(O)|$. Using Eq.~\eqref{eq-weak-star-limit}, we get $|\phi(O) - \psi(O)| = \lim_{\N \to\infty} |\phi_{\N}(O) - \psi(O)|$. Moreover, for any fixed $\N$, we have
\begin{align}
    |\phi_{\N}(O) - \psi(O)| &= \Big| [\mbox{}_{\N}\bra{\phi}] O [\ket{\phi}_{\N}] - \bra{\psi}O\ket{\psi}\Big| \, ,\nonumber\\
    &\le \Big\Vert  [\ket{\phi}_{\N}\mbox{}_{\N}\bra{\phi}]  - \ket{\psi}\bra{\psi}\Big\Vert_{1} \, \Vert O \Vert \, ,\nonumber\\
    &\le 2 \sqrt{1 - |\bra{\psi}[\ket{\phi}_N]|^2} \, \Vert O \Vert \, ,\nonumber\\
    &\le 2 \sqrt{1 - \delta^2 } \, \Vert O \Vert \, ,
\end{align}
where we have used the assumption that the overlap of $[\ket{\phi}_N]$ with $\ket{\psi}$ is bounded from below by $\delta > 0$ in the last step. Since this is true for all $\N$, we deduce that $|\phi(O) - \psi(O)| \le 2 \sqrt{1 - \delta^2 } \, \Vert O \Vert$, and hence, $\Vert \phi - \psi\Vert \le 2 \sqrt{1 - \delta^2 } < 2 $.

Now we prove $\phi^{(n)} \ne 0$ by contradiction. Let us assume $\phi = \phi^{(s)}$ is a singular state functional, and let us define $\omega = \phi - \psi$ and decompose it as $\omega = \omega^{(n)} + \omega^{(s)}$. By uniqueness, we deduce that $\omega^{(n)} = - \psi$ and $\omega^{(s)} = \phi$. Moreover, $\Vert \omega \Vert = \Vert \omega^{(n)} \Vert + \Vert \omega^{(s)} \Vert = \Vert -\psi \Vert + \Vert \phi \Vert = 2$. This is a contradiction since we have already shown $\Vert \phi - \psi \Vert < 2$.

\textit{Part $2$}: Let us now consider the restriction of $\phi$ on $[\A_{\N}]$. Since $[\A_{\N}]$ is a finite-dimensional algebra and $\phi$ is a state functional, we can represent the restriction of $\phi$ to $[\A_{\N}]$ in terms of a reduced density matrix $[\rho_{\N}] \in [\A_{\N}]$ and a canonical trace on $\A_{\N}$. That is, $\phi([O_{\N}]) = \text{tr}_{[\A_{\N}]}([\rho_{\N}] [O_{\N}])$ for all $[O_{\N}] \in [\A_{\N}]$. To derive $[\rho_{\N}]$, we note that for any $[O_{\N}] \in [\A_{\N}]$, Eq.~\eqref{eq-weak-star-limit} yields
    \begin{align}
        \phi([O_{\N}]) \, =& \lim_{\N'\to\infty} [\mbox{}_{\N'}\bra{\phi}] [O_{\N}] [\ket{\phi}_{\N'}] \, ,\\
        =& \lim_{\N'\to\infty} \mbox{}_{\N'}\bra{\phi} \gamma_{\N,\N'}\left(O_{\N}\right) \ket{\phi}_{\N'} \, ,\nonumber\\
        =& \lim_{\N'\to\infty} \text{tr}_{\gamma_{\N,\N'}(\A_{\N})}\left( \sigma^{(\N')}_{\N} \, \gamma_{\N,\N'}\left(O_{\N}\right) \right) \, ,\nonumber
    \end{align}
    where $\sigma_{\N}^{(\N')}$ is the density matrix of $\ket{\phi}_{\N'}$ on $\gamma_{\N,\N'}(\A_{\N})$. However, since $\sigma_{\N}^{(\N')} \in \gamma_{\N,\N'}(\A_{\N})$ and since embedding maps are injective, it means that there must exist some $\rho_{\N}^{(\N')} \in \A_{\N}$ such that $\sigma_{\N}^{(\N')} = \gamma_{\N,\N'}(\rho_{\N}^{(\N')})$. As a result, we get
    \begin{align}
        \phi([O_{\N}]) \, =& \, \lim_{\N' \to\infty} \text{tr}_{\A_{\N}}\left( \rho^{(\N')}_{\N} \, O_{\N} \right) \, \nonumber\\
        =&\, \text{tr}_{\A_{\N}}\left( \rho_{\N} \, O_{\N} \right) \, = \text{tr}_{[\A_{\N}]}\left( [\rho_{\N}] \, [O_{\N}] \right) \, ,\label{eq-subsequence-identity}
    \end{align}
    where $\rho_{\N} = \lim_{\N' \to\infty} \rho^{(\N')}_{\N} $. 
    
    We now argue that $[\rho_{\N}] \in [\A_{\N}]$ is proportional to a projection. Since $\ket{\phi}_{\N'}$ is a stabilizer state, Lemma~\ref{lemma-stab-proj} implies that $\sigma_{\N}^{(\N')}$ is proportional to a projection operator. Moreover, since the embedding maps $\gamma_{\N,\N'}$ are injective $*$-homomorphisms, we deduce that $\rho_{\N}^{(\N')} \in \A_{\N}$ must also be proportional to a projection operator for all $\N' > N$. This means that $\chi(\rho_{\N}^{(\N')}) \equiv S_{3}(\rho_{\N}^{(\N')}) - S_{2}(\rho_{\N}^{(\N')}) = 0$ for all $\N' > \N$, where $S_n(\rho)$ denotes the $n^{\text{th}}$ R\'enyi entropy of $\rho$. For finite-dimensional algebras like $\A_{\N}$, the R\'enyi entropies and hence $\chi$ is a continuous function of state, which means $\chi(\rho_{\N}) = \lim_{\N'\to\infty} \chi(\rho_{\N}^{(\N')}) = 0$. Therefore, $\rho_{\N}$ has a flat spectrum which implies it is proportional to a projection operator. Consequently, $[\rho_{\N}] \in [\A_{\N}]$ is also proportional to a projection, which we denote by $[P_{\N}]$.

    Therefore, the restriction of $\phi$ to $[\A_{\N}]$ defines a projection $[P_{\N}] \in [\A_N] $. Since this is true for all $\N$, we get a sequence of projections. Moreover, since the algebras $\{[\A_{\N}]\}$ are nested, the projections $\{[P_{\N}]\}$ form a decreasing sequence.

    It is known that a decreasing sequence of projections in a von Neumann algebra $\A$ converges in SOT to a projection in $\A$ \cite[Thm.~4.1.2]{murphy}. We denote this projection by $P$.
    
    \textit{Part $3$}: Now we combine parts $1$ and $2$. We observe from Eq.~\eqref{eq-subsequence-identity} that $\phi([P_{\N}]) = 1$ for all $\N$. This means that $\phi(1-[P_{\N}]) = \phi^{(n)}(1-[P_{\N}]) + \phi^{(s)}(1-[P_{\N}]) = 0$. Since both $\phi^{(n)}$ and $\phi^{(s)}$ are positive functionals, and since $1-[P_{\N}]$ is a projection, we deduce that $\phi^{(n)}(1-[P_{\N}]) = 0$. 

    Since $\{ 1 - P_{\N} \}$ forms an increasing sequence of projections and $\phi^{(n)}$ is a normal state, we get $\phi^{(n)}(1-P) = \lim_{\N\to\infty} \phi^{(n)}(1-P_{\N}) = 0$. Therefore, $\phi^{(n)}(P) = \phi^{(n)}(1) = \Vert \phi^{(n)}\Vert \ne 0$, and hence, $P \ne 0$. This proves that $P$ is non-zero and that  $\phi^{(n)}(P) = \phi^{(n)}(1) < \infty$. 

    Finally, we need to show that $\phi^{(n)}$ satisfies $\phi^{(n)}(OO') = \phi^{(n)}(O'O)$ for any $O,O' \in P\A P$. Since any $O$ can be written as a linear combination of at most four unitary operators, it is enough to prove that $\phi^{(n)}(U O) = \phi^{(n)}(OU ) $ for any $O \in P\A P$ and any unitary $U \in P\A P$. We will now prove this simpler condition.

    Let us consider a fixed $\N$ and let $U_{\N}$ be any unitary operator in $[P_{\N} \A_{\N} P_{\N}]$. Using this unitary, we define a transformed state functional $\phi_{U_{\N}}(\cdot) = \phi(U_{\N}^\dagger \cdot \, U_{\N})$. From Eq.~\eqref{eq-subsequence-identity}, we note that for any $[O_{\N}] \in [\A_{\N}]$, we get  
    \begin{align}
        \phi(U_{\N}^\dagger [O_{\N}] U_{\N}) =& \text{tr}_{[\A_{\N}]}\left( [\rho_{\N}] \, U_{\N}^\dagger [O_{\N}] U_{\N} \right) \\=& \text{tr}_{[\A_{\N}]}\left( [\rho_{\N}] \, [O_{\N}] \right) = \phi([O_{\N}]) \, ,\nonumber
    \end{align}
    where we have used the fact that $[\rho_{\N}]$ is proportional to the projector $[P_{\N}]$, and hence, commutes with $U_{\N} \in [P_{\N} \A_{\N} P_{\N}]$. Therefore, $\phi_{U_{\N}} = \phi$ on the subalgebra $[P_{\N} \A_{\N} P_{\N}]$. 
    
    Note that the normal part of $\phi_{U_{\N}}$ is given by $\phi^{(n)}_{U_{\N}} = \phi^{(n)}(U_{\N}^\dagger \cdot \, U_{\N})$. Since the decomposition of a state functional into a normal and a singular part as in Eq.~\eqref{functional-decomposition} is unique, and since both $\phi$ and $\phi_{U_{\N}}$ are equal on subalgebra $[P_{\N} \A_{\N} P_{\N}]$, we deduce that their normal parts must also be equal on this subalgebra. Therefore, we get
    \begin{align}
        \phi^{(n)}(U_{\N}^\dagger [O_{\N}] U_{\N}) = \phi^{(n)}([O_{\N}]) \, ,
    \end{align}
    or equivalently, 
    \begin{align}
        \phi^{(n)}(U_{\N} [O_{\N}] ) = \phi^{(n)}([O_{\N}]U_{\N}) \, ,\label{eq-cyclic-phin}
    \end{align}
    for all $[O_{\N}] \in [P_{\N} \A_{\N} P_{\N}]$. 

    Now we consider an arbitrary $O$ and an arbitrary unitary $U$ in the subalgebra $P\A P$. Since $P\A P$ is von Neumann algebra, and hence, is SOT-closed, we can find sequences $[O_{\N}] \in [P_{\N} \A_{\N} P_{\N}]$ and $U_{\N} \in [P_{\N} \A_{\N} P_{\N}]$ that converge in SOT to $O$ and $U$ in $P\A P$, respectively. Moreover, since $U_{\N}$ is unitary, the product $U_{\N}[O_{\N}]$ also converges to $UO$ in SOT:
    \begin{align}
         \lim_{\N} \Vert (U_{\N}[O_{\N}] - UO) \ket{\xi}\Vert =& \lim_{\N} \Vert (U_{\N}[O_{\N}] - U_{\N}O + U_{\N}O - UO) \ket{\xi}\Vert  \\ \le& \lim_{\N} \Vert U_{\N}([O_{\N}] - O)\ket{\xi}\Vert + \lim_{\N} \Vert (U_{\N} - U) O\ket{\xi}\Vert \nonumber\\
         =& \lim_{\N} \Vert ([O_{\N}] - O)\ket{\xi}\Vert + \lim_{\N} \Vert (U_{\N} - U) \ket{O\xi}\Vert = 0 \nonumber \, ,
    \end{align}
    where the first term in the last step vanishes because of $[O_{\N}] \to O$ in SOT whereas the second term vanishes because of $U_{\N} \to U$ in SOT. Since $\phi^{(n)}$ is a normal functional, we get
    \begin{align}
        \phi^{(n)}(UO) =& \lim_{\N} \phi^{(n)}(U_{\N}[O_{\N}]) \nonumber\\ =& \lim_{\N} \phi^{(n)}([O_{\N}]U_{\N}) = \phi^{(n)}(OU) \, ,
    \end{align}
    where we have used Eq.~\eqref{eq-cyclic-phin} in the intermediate step. This proves that $\phi^{(n)}(U O) = \phi^{(n)}(OU ) $ for any $O \in P\A P$ and any unitary $U \in P\A P$.
    
\end{proof}

\bibliographystyle{utcaps}
\bibliography{ref}
\end{document}